\newcommand{\diag}{\mathop{\rm diag}\nolimits}
\title{
Quantized Output Feedback Stabilization of Switched Linear Systems
}
\author{Masashi Wakaiki and Yutaka Yamamoto
\thanks{M. Wakaiki and Y. Yamamoto are with the Department of Applied Analysis and Complex
Dynamical Systems, Graduate School of Informatics, Kyoto University, Kyoto
606-8501, Japan
(e-mail:{\tt  \ wakaiki@acs.i.kyoto-u.ac.jp};
{\tt \ yy@i.kyoto-u.ac.jp}).}%
}
\begin{document}

\maketitle
\thispagestyle{empty}
\pagestyle{empty}

\begin{abstract}
This paper studies the problem of stabilizing a continuous-time switched linear system
by quantized output feedback.
We assume that the quantized outputs
and the switching signal are available to the controller at all time.
We develop an encoding strategy by using multiple Lyapunov functions 
and an average dwell time property.
The encoding strategy is based on the results in the case of a single mode, 
and it requires an additional adjustment of the ``zoom'' parameter
at every switching time. 
\end{abstract}

\section{INTRODUCTION}
This paper studies the quantized control problem for switched systems.
For linear time-invariant systems, various approaches to quantized control have
been developed: 
Lyapunov-based methods~\cite{Ishii2002Book,Brockett2000, Liberzon2003Automatica},
optimization with $\ell^{\infty}$ norm~\cite{Asuma2008}, etc.
In contrast, few results of quantized control are generalized to switched systems
in spite of a wide range of their applications.
Recently, based on the results in \cite{Liberzon2003} for a single mode,
Liberzon~\cite{Liberzon2013} has developed an encoding
and control strategy achieving the global asymptotic stability of 
sampled-data switched systems with quantized state feedback.
Also, quantized state feedback stabilization is discussed 
for discrete-time Markov jump linear systems in \cite{Nair2003, Zhang2009, Ling2010}.
However, stabilization of switched systems by quantized {\em output}
feedback has not yet explored.

Here we consider a continuous-time switching linear system, whose
quantized outputs and switching signal are transmitted to the controller at all times.
The objective of this paper is to extend the encoding method 
of \cite{Brockett2000, Liberzon2003Automatica} for non-switched systems.
The key point of the earlier studies is that certain level sets of a Lyapunov function
are invariant regions.
The difficulty of switched systems is that such level sets change at every switching time.
Therefore, at the ``zooming-in'' stage,
non-switched systems require only
periodic reduction of the ``zoom'' parameter of quantizers,
whereas in switched systems, 
we need to adjust the parameter after each switch.
We assume that the average dwell time~\cite{Hespanha1999CDC} 
of the switching signal is
large enough, and develop an output encoding for
global asymptotic stabilization by using multiple Lyapunov functions.

This paper is organized as follows. In Section II, we explain the components 
of the closed-loop system one by one 
and then give the main result, Theorem \ref{thm:stability_theorem}.
Section III is devoted to its proof.
We present a numerical example in Section IV and finally conclude this paper 
in Section V.


{\em Notation:~}
Let $\lambda_{\min}(P)$ and $\lambda_{\max}(P)$ denote 
the smallest and the largest eigenvalue of $P \in \mathbb{R}^{\sf n\times n}$.
Let $M^{\top}$ denote the transpose of $M \in \mathbb{R}^{\sf m\times n}$.

The Euclidean norm of $v \in \mathbb{R}^{\sf n}$ is
denoted by $|v| = (v^*v)^{1/2}$. 
The Euclidean induced norm of $M \in \mathbb{R}^{\sf m\times n}$ is defined by
$\|M\| = \sup \{  |Mv |:~v\in \mathbb{R}^{\sf n},~|v|= 1 \}$,
which equals the largest singular value of $M$.

For a piecewise continuous function $f:~\mathbb{R} \to \mathbb{R}$,
its left-sided limit at $t_0 \in \mathbb{R}$ is denoted by $\lim_{t \nearrow t_0}f(t)$.

\section{QUANTIZED OUTPUT FEEDBACK STABILIZATION FOR \\ SWITCHED SYSTEMS}
In this section, we first define switched systems and 
construct quantizers and controllers
based on the non-switched case in \cite{Liberzon2003Automatica}.
Next we present the main result, Theorem \ref{thm:stability_theorem}.
This theorem guarantees 
the existence of a quantizer leading to the globally asymptotic stability 
of the closed-loop system under an average dwell time assumption.

\subsection{Switched linear systems}
Consider the switched linear system
\begin{equation}
\label{eq:SLS}
\dot x = A_{\sigma}x+B_{\sigma}u,\quad y = C_{\sigma}x,
\end{equation}
where $x(t) \in \mathbb{R}^{\sf{n}}$ is the state,
$u(t) \in \mathbb{R}^{\sf{m}}$ is the control input, and
$y(t) \in \mathbb{R}^{\sf{p}}$ is the output.
For a finite index set $\mathcal{P}$,
$\sigma:[0,\infty) \to \mathcal{P}$ is right-continuous and
piecewise constant.
We call $\sigma$ {\em switching signal} and
the discontinuities of $\sigma$ {\em switching times}.
Let denote by $N_{\sigma}(t,s)$
the number of discontinuities of $\sigma$ on the interval $(s,t]$. 

Assumptions on the switched system \eqref{eq:SLS} are as follows.
\begin{assumption}
{\em
\label{ass:system}
For every $p \in \mathcal{P}$, $(A_p, B_p)$ is stabilizable and 
$(C_p, A_p)$ is observable. 
We choose $K_p \in \mathbb{R}^{\sf m \times n}$ and 
$L_p \in \mathbb{R}^{\sf n \times p}$ so that
$A_p+B_pK_p$ and $A_p+L_pC_p$ are Hurwitz.

Furthermore, the switching signal $\sigma$ has an average dwell time~\cite{Hespanha1999CDC},
i.e., there exist $\tau_a>0$ and $N_0 \geq 1$ such that
\begin{equation}
\label{eq:ADT_cond}
N_{\sigma}(t,s) \leq N_0 + \frac{t-s}{\tau_a}
\qquad (t > s \geq 0).
\end{equation}
}
\end{assumption}
\subsection{Quantizer}
In this paper, 
we use the following class of quantizers proposed in \cite{Liberzon2003Automatica}.

Let $\mathcal{Q}$ be a finite subset of $\mathbb{R}^{\sf{p}}$.
A quantizer is a piecewise constant function 
$q:\mathbb{R}^{\sf{p}} \to \mathcal{Q}$.
This geometrically implies that
$\mathbb{R^{\sf{p}}}$ is divided into the quantized regions
$\{y \in \mathbb{R^{\sf{p}}}:~q(y) = y_i \}$ $(y_i \in \mathcal{Q})$.
For the quantizer $q$, there exist positive numbers $M$ and $\Delta$ with
$M > \Delta$ such that
\begin{align}
\label{eq:quantizer_cond1_nonSaturation}
|y| \leq M &\quad \Rightarrow \quad |q(y) - y| \leq \Delta \\
\label{eq:quantizer_cond2}
|y| > M &\quad \Rightarrow \quad |q(y) | > M - \Delta.
\end{align}
The former condition \eqref{eq:quantizer_cond1_nonSaturation} gives
an upper bound of the quantization error when the quantizer is not saturated.
The latter \eqref{eq:quantizer_cond2} is used for the detection of the saturation.

We make the following assumption on the behavior of the quantizer $q$ near the origin:
\begin{assumption}[\cite{Liberzon2003Automatica,Liberzon2007}]
\label{ass:near origin}
{\em 
There exists $\Delta_0 > 0$ such that $q(y)=0$
for every $y \in \mathbb{R}^{\sf{p}}$ with $|y|\leq \Delta_0$.
}
\end{assumption}
This assumption is necessary for the Lyapunov stability of the closed-loop system. 

We give the above quantizers the following adjustable parameter $\mu > 0$:
\begin{equation}
\label{eq:zoom_q}
q_{\mu}(y) = \mu q\left( \frac{y}{\mu}\right).
\end{equation}
In \eqref{eq:zoom_q}, $\mu$ is regarded as a ``zoom'' variable,
and $q_{\mu(t)}(y(t))$ is the data on $y(t)$ transmitted to the controller.
We need to change $\mu$ to obtain accurate information of $y$. 
The reader can refer to 
\cite{Liberzon2003Automatica, Liberzon2007, Liberzon2003Book}
for further discussion.

\begin{remark}
The quantized output $q_{\mu}(y)$ may chatter on the boundaries
among the quantization regions. Hence if we generate $u$ by $q_{\mu}(y)$,
the solutions of \eqref{eq:SLS} must be interpreted 
in the sense of Filippov~\cite{Filippov1988,Cortes2008}.
However this generalization does not affect our Lyapunov-based analysis in this work,
because we will work with a single quadratic Lyapunov function between switching times.
See also \cite{Skafidas1999}, which presents
a Lyapunov-based analysis with the generalized solutions for switched controller systems.
\end{remark}

\subsection{Construction of controllers}
We construct the following dynamic output feedback law based on 
the standard Luenberger observers:
\begin{equation}
\label{eq:controller_def}
\dot \xi = (A_{\sigma}+L_{\sigma}C_{\sigma})\xi + 
B_{\sigma}u - L_{\sigma}q_{\mu}(y),\quad
u = K_{\sigma} \xi,
\end{equation}
where $\xi \in \mathbb{R}^{\sf{n}}$ is the estimated state.
Then the closed-loop system is given by
\begin{align}
\label{eq:ClosedSystem}
\begin{array}{l}
\dot x = A_{\sigma}x + B_{\sigma}K_{\sigma}\xi \\
\dot \xi = (A_{\sigma}+L_{\sigma}C_{\sigma})\xi +
B_{\sigma}K_{\sigma}\xi - L_{\sigma} q_{\mu}(y).
\end{array}
\end{align}
If we define $z$ and $F_{\sigma}$ by
\begin{equation*}
z := 
\begin{bmatrix}
x \\ x - \xi
\end{bmatrix},\quad
F_{\sigma} := 
\begin{bmatrix}
A_{\sigma}+B_{\sigma}K_{\sigma} & -B_{\sigma}K_{\sigma} \\
0 & A_{\sigma}+L_{\sigma}C_{\sigma}
\end{bmatrix},
\end{equation*}
then we rewrite \eqref{eq:ClosedSystem} in the form
\begin{equation}
\label{eq:ClosedSystem_simple}
\dot z = F_{\sigma} z +
\begin{bmatrix} 
0 \\ L_{\sigma}
\end{bmatrix}
(q_{\mu}(y) - y) .
\end{equation}
Since $F_p$ is Hurwitz for every $p \in \mathcal{P}$,
there exist positive-definite matrices 
$P_p, Q_p \in \mathbb{R}^{2{\sf n} \rm \times 2 {\sf n}}$ such that
\begin{equation}
\label{eq:Lyapunov_cont}
F_p^{\top}P_p + P_pF_p = -Q_p
\qquad (p \in \mathcal{P}).
\end{equation}
We define 
$\overline \lambda_P$, $\underline \lambda_P$, 
$\underline \lambda_Q$, and $C_{\max}$ by
\begin{align}
\label{eq:alpha_def}
\begin{array}{c}
\overline \lambda_P := 
{\displaystyle \max_{p \in \mathcal{P}} \lambda_{\max}(P_p)},
\quad
\underline \lambda_P  := 
{\displaystyle \min_{p \in \mathcal{P}} \lambda_{\min}(P_p)}, \\[8pt]
\underline \lambda_Q  := 
{\displaystyle \min_{p \in \mathcal{P}} \lambda_{\min}(Q_p)},
\quad
C_{\max} := 
{\displaystyle \max_{p \in \mathcal{P}}\|C_p\|}.
\end{array}
\end{align}

\subsection{Main result}
As in the non-switched case~\cite{Liberzon2003Automatica},
by adjusting the ``zoom'' parameter $\mu$,
we can achieve the global asymptotic stability of 
the closed-loop system \eqref{eq:ClosedSystem_simple}
in Fig.~\ref{fig:CSSQOF}.

\begin{figure}[bt]
 \centering
 \includegraphics[width = 7cm,bb= 20 20 705 500,clip]
{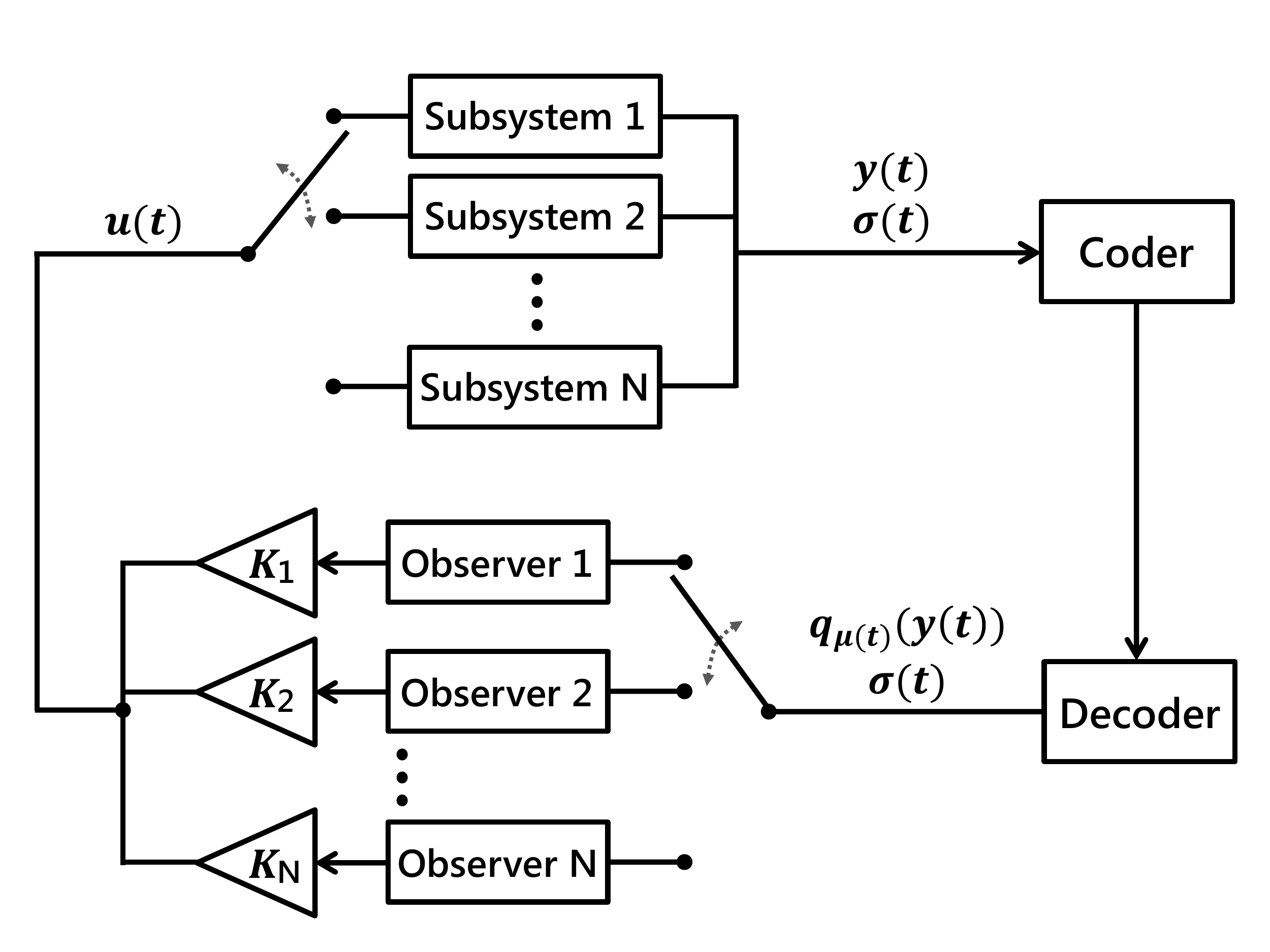}
 \caption{Continuous-time switched system with quantized output feedback.}
 \label{fig:CSSQOF}
 \end{figure}

\begin{theorem}
\label{thm:stability_theorem}
{\em
Define $\Theta$ by
\begin{equation}
\label{eq:Theta_def}
\Theta := \frac{2 \max_{p\in \mathcal{P}} \|P_p\hat L_p\|}{\underline \lambda_Q},
\text{~~where~~}
\hat L_p := 
\begin{bmatrix} 
0 \\ L_{p}
\end{bmatrix}.
\end{equation}
and let $M$ be large enough to satisfy
\begin{equation}
\label{eq:M_Delta_cond_main_thm}
M > \max
\left\{ 2\Delta,\quad
\sqrt{\frac{\overline \lambda_P}{\underline \lambda_P}}
\Theta \Delta C_{\max}
\right\}.
\end{equation}
If the average dwell time $\tau_a$ is longer than a certain value, then
there exists a piecewise constant function $\mu$ such that
the closed-loop system \eqref{eq:ClosedSystem_simple} has
the following two properties 
for every $x(0) \in \mathbb{R}^{\sf{n}}$ and every $\sigma(0) \in \mathcal{P}$:

{\sl Convergence to the origin:~}
$\lim_{t \to \infty}z(t) = 0$.

{\sl Lyapunov stability:~}
To every $\varepsilon > 0$, there corresponds $\delta > 0$ such that
\begin{equation*}
|x(0)| < \delta \quad \Rightarrow \quad
|z(t)| < \varepsilon~~~(t\geq 0).
\end{equation*}
}
\end{theorem}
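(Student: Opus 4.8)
The plan is to run the two-stage ``zoom-out/zoom-in'' scheme of \cite{Liberzon2003Automatica}, but with an extra rescaling of $\mu$ at every switching time, and to close the argument with the average dwell time bound \eqref{eq:ADT_cond}. Between switching times the mode is fixed, so I would work with the single quadratic Lyapunov function $V_p(z) = z^{\top}P_p z$ of the active mode $p$. Differentiating along \eqref{eq:ClosedSystem_simple} and using \eqref{eq:Lyapunov_cont} gives $\dot V_p = -z^{\top}Q_p z + 2 z^{\top}P_p \hat L_p (q_{\mu}(y)-y)$. As long as the quantizer is unsaturated, i.e.\ $|y|\le M\mu$, the bound \eqref{eq:quantizer_cond1_nonSaturation} yields $|q_{\mu}(y)-y|\le \mu\Delta$, whence $\dot V_p \le -\underline\lambda_Q|z|^2 + 2\max_{p}\|P_p\hat L_p\|\,\mu\Delta\,|z| = -\underline\lambda_Q|z|\,(|z|-\Theta\mu\Delta)$ with $\Theta$ as in \eqref{eq:Theta_def}. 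Hence $V_p$ strictly decreases while $|z|>\Theta\mu\Delta$, and on the region $|z|\ge 2\Theta\mu\Delta$ it decays at the uniform exponential rate $\underline\lambda_Q/(2\overline\lambda_P)$; this is the contraction engine of the whole proof.

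The key is to maintain the sublevel invariant $V_{\sigma(t)}(z(t)) \le \Lambda\,(M\mu(t))^2$ with $\Lambda := \underline\lambda_P/C_{\max}^2$. Since $|y|\le C_{\max}|z|$ and $V_p(z)\ge \underline\lambda_P|z|^2$, this invariant forces $|z|\le M\mu/C_{\max}$ and therefore $|y|\le M\mu$, so the quantizer indeed stays unsaturated. At a switch from $p$ to $p'$ the vector $z$ is continuous but the Lyapunov function changes, and $V_{p'}(z)\le (\overline\lambda_P/\underline\lambda_P)\,V_p(z)$; I would therefore reset $\mu(t_k^{+})=\sqrt{\overline\lambda_P/\underline\lambda_P}\,\mu(t_k^{-})$ at each switching time $t_k$ to restore the invariant for the new mode, which is the ``additional adjustment'' promised in the abstract. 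Crucially, the target ball $|z|\le\Theta\mu\Delta$ must sit strictly inside the invariant set, i.e.\ $\overline\lambda_P(\Theta\mu\Delta)^2<\Lambda(M\mu)^2$; unwinding this inequality is exactly the hypothesis \eqref{eq:M_Delta_cond_main_thm}, which is precisely what leaves the zoom-in room to keep shrinking $\mu$ after the upward resets.

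The design of $\mu$ then becomes transparent through the normalized quantity $W:=V_{\sigma(t)}(z(t))/\mu(t)^2$: the invariant reads $W\le \Lambda M^2$, the per-mode estimate makes $W$ decay, and the switch reset is engineered so that $W$ does not increase across switches, because both $V_{\sigma}$ and $\mu^2$ jump by the same factor $\overline\lambda_P/\underline\lambda_P$. I would define $\mu$ by a finite zoom-out phase (growing $\mu$ until \eqref{eq:quantizer_cond2} no longer flags saturation and $W\le\Lambda M^2$ first holds), followed by a zoom-in phase in which $\mu$ is multiplied by a fixed factor $\Omega<1$ each time $W$ has decayed enough that the reduction still respects the invariant, the slack for which is again \eqref{eq:M_Delta_cond_main_thm}. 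The only obstruction to $\mu\to0$ is that each switch rescales $\mu$ upward by $\sqrt{\overline\lambda_P/\underline\lambda_P}$; over a time window the net change in $\ln\mu$ is the number of zoom-in reductions times $\ln\Omega$ plus the number of switches times $\tfrac12\ln(\overline\lambda_P/\underline\lambda_P)$, and \eqref{eq:ADT_cond} caps the latter count by $N_0+(\text{window length})/\tau_a$. Taking $\tau_a$ above the threshold at which the guaranteed decay rate $\underline\lambda_Q/(2\overline\lambda_P)$ outweighs $\tfrac{1}{2\tau_a}\ln(\overline\lambda_P/\underline\lambda_P)$ makes this net change uniformly negative, so $\mu(t)\to0$ geometrically and, since $|z|\le M\mu/C_{\max}$ on the invariant, $z(t)\to0$. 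Lyapunov stability follows from the same nested picture: small $|x(0)|$, hence small $|z(0)|$, keeps every sublevel set, and with it $\sup_{t\ge0}|z(t)|$, within a fixed multiple of $|z(0)|$, so choosing $\delta$ proportional to $\varepsilon$ gives the claim.

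The main obstacle I anticipate is orchestrating the piecewise-constant $\mu$ so that the invariant survives both the continuous decay and the discrete upward jumps simultaneously, while still forcing $\mu\to0$: when switches cluster, which is still admissible under \eqref{eq:ADT_cond} through the $N_0$ term, several resets can occur before the decay has had time to act, so the estimate must be carried out over windows long enough to amortize this burst of switches rather than switch by switch. Pinning down the precise threshold on $\tau_a$, and verifying that the zoom-out phase terminates in finite time in the output-feedback setting where only $q_{\mu}(y)$ is available, are the places where the estimates will need the most care.
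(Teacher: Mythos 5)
Your zoom-in stage and dwell-time bookkeeping are essentially the paper's own argument: the per-mode estimate $\dot V_p \le -\underline\lambda_Q|z|\,(|z|-\Theta\mu\Delta)$, the invariant ellipsoid $\{z:\,z^{\top}P_{\sigma}z\le \underline\lambda_P M^2\mu^2/C_{\max}^2\}$ (the paper's $\mathscr{R}_1$), an upward rescaling of $\mu$ at each switch (you use the uniform factor $\sqrt{\overline\lambda_P/\underline\lambda_P}$ where the paper uses pairwise constants $c_{p_2,p_1}$ as in \eqref{eq:Lyapunov_cond}; this only coarsens the threshold on $\tau_a$), and the comparison of the geometric reduction $\Omega<1$ per window of length $T$ against the switching penalty capped by \eqref{eq:ADT_cond}. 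However, there are two genuine gaps. First, your zoom-out stage is not implementable in the output-feedback setting: you propose to grow $\mu$ until ``$W\le\Lambda M^2$ first holds,'' but $W=z^{\top}P_{\sigma}z/\mu^2$ depends on the unmeasured state $x$ and on the observer error $x-\xi$, neither of which the controller can evaluate; unsaturation of the quantizer only certifies $|y|\le M\mu$, which bounds nothing about $|z|$. The idea you are missing (and which is the main new ingredient relative to quantized state feedback) is the paper's finite-time state reconstruction: keep $u\equiv 0$ so the plant runs open loop, use Lemma~\ref{lem:ADT_upperbound} to locate a switch-free window of length $\tau$ within a uniformly bounded horizon, and on that window recover the state from the quantized output through the observability Gramian $W_p(\tau)$, with quantization error at most $\|W_p(\tau)^{-1}\|\tau\Lambda_p^C(\tau)\Delta\mu^-$. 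This produces a computable bound $E(t_0+\tau)$ on $|z(T_0)|$ from which $\mu(T_0)$ is chosen via \eqref{eq:z(ta0+Ntau)2}, so that the scheme certifiably enters the zoom-in stage; without it, it never does.

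Second, your Lyapunov stability argument fails as stated. You claim that small $|z(0)|$ keeps the sublevel sets, and hence $\sup_{t\ge0}|z(t)|$, within a fixed multiple of $|z(0)|$, so that $\delta$ proportional to $\varepsilon$ suffices. But in this scheme $\mu(T_0)$ is bounded below by a positive constant independent of the initial condition: the zoom-out starts from $\mu=1$, and the Gramian-based bound $E(t_0+\tau)$ contains the additive quantization-error term $2e(t_0+\tau)\ge 2\|W_p(\tau)^{-1}\|\tau\Lambda_p^C(\tau)\|e^{A_p\tau}\|\Delta>0$ even when $x(0)=0$. Hence the first invariant ellipsoid $\mathscr{R}_1(\mu(T_0),\sigma(T_0))$ has size bounded below no matter how small $x(0)$ is, and it cannot confine the trajectory to a ball of radius proportional to $|z(0)|$. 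The paper's mechanism is different and hinges on Assumption~\ref{ass:near origin}: for $|x(0)|<\delta$ small, the output remains in the dead zone $|y|\le\Delta_0\mu(t)$, so $q_{\mu}(y)\equiv 0$, $\xi\equiv 0$, $u\equiv 0$ on $[0,T_1]$, where $T_1=T_0+\bar m T$ is a uniformly bounded time by which the shrinking invariant region has entered $\mathscr{B}_{\varepsilon}$; on $[0,T_1]$ the state merely drifts, $|x(t)|\le e^{\Gamma T_1}|x(0)|<\varepsilon$, and after $T_1$ invariance of $\mathscr{R}_1\subset\mathscr{B}_{\varepsilon}$ takes over. Your sketch never invokes Assumption~\ref{ass:near origin}, which is exactly the hypothesis that makes Lyapunov stability (as opposed to mere attractivity) provable here.
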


In the next section,
we shall prove Theorem \ref{thm:stability_theorem}
with the concrete construction of $\mu$.
The sufficient condition on $\tau_a$ is given by \eqref{eq:adt_cond} below.

\section{The proof of Theorem \ref{thm:stability_theorem}}
Let us first consider the fixed ``zoom'' parameter $\mu$.
We obtain the following result on the state trajectories of 
each individual mode.
\begin{lemma}
\label{lem:fix_zoom_parameter}
Fix $p \in \mathcal{P}$, and consider
the non-switched system
\begin{equation}
\label{eq:noSwitchSystem}
\dot z = F_p z + \hat L_p 
(q_{\mu}(y) - y).
\end{equation}
Choose $\kappa > 0$, and
suppose that $M$ satisfies
\begin{equation}
\label{eq:M_Delta_cond}
\sqrt{\underline \lambda_P} M > \sqrt{\overline \lambda_P} \Theta \Delta(1+ \kappa)
C_{\max},
\end{equation}
where $C_{\max}$ and $\Theta$ are defined by 
\eqref{eq:alpha_def} and
\eqref{eq:Theta_def}, respectively.
Then the two ellipsoids
\begin{align*}
\mathscr{R}_1 (\mu,p) &:= 
\left\{
z:~z^{\top}P_pz \leq 
\frac{\underline \lambda_P M^2\mu^2}{C_{\max}^2}
\right\} \\
\mathscr{R}_2 (\mu,p) &:= 
\left\{
z:~z^{\top}P_pz \leq 
\overline \lambda_P (\Theta \Delta (1+\kappa))^2\mu^2
\right\}
\end{align*}
are invariant sets of every trajectory of \eqref{eq:noSwitchSystem}.
Furthermore, if $T$ satisfies
\begin{equation}
\label{eq:T_def}
T >
\frac{\underline \lambda_P M^2 - \overline \lambda_P (\Theta 
\Delta(1+\kappa) C_{\max})^2}
{\underline \lambda_Q \kappa(1+\kappa)(\Theta \Delta C_{\max})^2},
\end{equation}
then every trajectory of \eqref{eq:noSwitchSystem} with an
initial state $z(0) \in \mathscr{R}_1 (\mu)$ satisfies
$z(T) \in \mathscr{R}_2 (\mu)$．
\end{lemma}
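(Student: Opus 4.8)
The plan is to work with the single quadratic Lyapunov function $V(z) = z^\top P_p z$ and track its evolution along the flow of \eqref{eq:noSwitchSystem}. First I would differentiate $V$ and substitute the Lyapunov equation \eqref{eq:Lyapunov_cont} to obtain $\dot V = -z^\top Q_p z + 2z^\top P_p \hat L_p (q_\mu(y) - y)$. The first term is bounded below via $z^\top Q_p z \geq \underline\lambda_Q |z|^2$, while the cross term can be controlled only once I know the quantizer is not saturated, so establishing non-saturation inside $\mathscr{R}_1(\mu,p)$ is the pivotal preliminary step.

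The key observation for non-saturation is that the block structure $z = [\,x^\top,\ (x-\xi)^\top\,]^\top$ forces $|x| \leq |z|$, hence $|y| = |C_p x| \leq C_{\max}|z|$. Inside $\mathscr{R}_1(\mu,p)$ the eigenvalue bound $\underline\lambda_P |z|^2 \leq z^\top P_p z$ gives $|z| \leq M\mu/C_{\max}$ and therefore $|y| \leq M\mu$, so $|y/\mu| \leq M$ and \eqref{eq:quantizer_cond1_nonSaturation} yields $|q_\mu(y) - y| \leq \Delta\mu$. Combining this with the definition \eqref{eq:Theta_def}, which makes $2\|P_p \hat L_p\| \leq \underline\lambda_Q \Theta$, I arrive at the key estimate $\dot V \leq -\underline\lambda_Q |z|\,(|z| - \Theta\Delta\mu)$ holding throughout $\mathscr{R}_1(\mu,p)$.

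From this estimate both invariance claims reduce to a boundary sign check using $|z|^2 \geq (z^\top P_p z)/\overline\lambda_P$. On $\partial\mathscr{R}_2(\mu,p)$ this gives $|z| \geq \Theta\Delta(1+\kappa)\mu > \Theta\Delta\mu$, so $\dot V < 0$; on $\partial\mathscr{R}_1(\mu,p)$ the same inequality together with hypothesis \eqref{eq:M_Delta_cond} again forces $|z| > \Theta\Delta\mu$ and hence $\dot V < 0$, so no trajectory can cross either boundary outward. Since \eqref{eq:M_Delta_cond} also yields $\mathscr{R}_2(\mu,p) \subseteq \mathscr{R}_1(\mu,p)$, non-saturation persists on $\mathscr{R}_2(\mu,p)$ as well. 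For the finite-time claim, outside $\mathscr{R}_2(\mu,p)$ I have $|z| > \Theta\Delta(1+\kappa)\mu$, and substituting into the key estimate (both factors being increasing there) produces the uniform decay $\dot V \leq -\underline\lambda_Q \kappa(1+\kappa)(\Theta\Delta\mu)^2$. Integrating from the largest value of $V$ on $\mathscr{R}_1(\mu,p)$, namely $\underline\lambda_P M^2\mu^2/C_{\max}^2$, down to the defining level $\overline\lambda_P(\Theta\Delta(1+\kappa))^2\mu^2$ of $\mathscr{R}_2(\mu,p)$ bounds the crossing time by the right-hand side of \eqref{eq:T_def}; the factors of $\mu^2$ cancel in this ratio, which is precisely why the threshold is independent of the zoom level.

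The main obstacle is the careful bookkeeping between the Euclidean norm $|z|$ that governs $\dot V$ and the $P_p$-weighted norm defining the ellipsoids: every invariance and timing estimate hinges on inserting $\underline\lambda_P$ and $\overline\lambda_P$ in the correct direction when passing between the two, and on verifying that non-saturation holds on all of $\mathscr{R}_1(\mu,p)$ rather than only at the initial instant, so that the decay estimate remains valid along the entire trajectory.
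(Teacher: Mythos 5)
Your proposal is correct and follows essentially the same route as the paper's proof: the quadratic Lyapunov function $z^{\top}P_pz$, the estimate $\dot V \leq -\underline\lambda_Q|z|\,(|z|-\Theta\Delta\mu)$ under non-saturation guaranteed by $|y|\leq C_{\max}|z|\leq M\mu$ inside $\mathscr{R}_1(\mu,p)$, and the uniform decay rate $\underline\lambda_Q\kappa(1+\kappa)(\Theta\Delta\mu)^2$ outside $\mathscr{R}_2(\mu,p)$ yielding the bound \eqref{eq:T_def}. The only presentational difference is that the paper packages the eigenvalue bookkeeping as the nested inclusions $\mathscr{B}_2(\mu)\subset\mathscr{R}_2(\mu,p)\subset\mathscr{R}_1(\mu,p)\subset\mathscr{B}_1(\mu)$, whereas you check the sign of $\dot V$ directly on the two ellipsoid boundaries; these are equivalent arguments.
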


\begin{proof}
We prove this lemma in a manner similar to that of Lemma~1 in \cite{Liberzon2003Automatica}.

For every $p \in \mathcal{P}$,
the time derivative of $z^{\top}P_pz$ along the trajectories of
the system \eqref{eq:noSwitchSystem} satisfies
\begin{align}
\frac{d}{dt}(z^{\top}P_pz) 
&= 
-z^{\top} Q_pz +2 z^{\top} P_p \hat L_p
(q_{\mu}(y) - y) \notag \\
&\leq
-\lambda_{\min}(Q_p)|z|^2 + 2 \|P_p \hat L_p\| \cdot |z| \cdot |q_{\mu}(y) - y| \notag \\
&\leq
-\underline \lambda_Q|z|^2 + 2 \max_{p \in P}\|P_p \hat L_p\| 
\cdot |z| \cdot |q_{\mu}(y) - y| \notag \\
&=
-\underline \lambda_Q|z|(|z| - \Theta |q_{\mu}(y) - y|). \label{eq:noSwitch_time_derivative}
\end{align}
On the other hand, 
since $|y| = |C_px| \leq C_{\max}|z|$, it follows from
\eqref{eq:quantizer_cond1_nonSaturation} that,
\begin{equation*}
C_{\max} |z| \leq M\mu
\quad \Rightarrow \quad 
|q_{\mu}(y) - y| \leq \Delta \mu.
\end{equation*}
Hence \eqref{eq:noSwitch_time_derivative} shows
\begin{align}
&\Theta \Delta(1+\kappa)\mu 
\leq |z| \leq \frac{M\mu}{C_{\max}} \notag \\
&\qquad \Rightarrow \quad
\frac{d}{dt}(z^{\top}P_pz) 
\leq
-\underline \lambda_Q \kappa (1+\kappa)(\Theta \Delta \mu)^2.
\label{eq:noSwitch_time_derivative_cond}
\end{align}

If we define the balls $\mathscr{B}_1(\mu)$ and $\mathscr{B_2}(\mu)$ by
\begin{align*}
\mathscr{B}_1(\mu) &:=
\left\{
z:~|z| \leq \frac{M\mu}{C_{\max}}
\right\}\\
\mathscr{B}_2(\mu) &:=
\left\{
z:~|z| \leq \Theta\Delta(1+\kappa)\mu
\right\},
\end{align*}
then it follows from \eqref{eq:alpha_def} and \eqref{eq:M_Delta_cond} that
\begin{equation*}
\mathscr{B}_2(\mu) 
\subset
\mathscr{R}_2(\mu,p)
\subset
\mathscr{R}_1(\mu,p)
\subset
\mathscr{B}_1(\mu) 
\end{equation*}
for $p \in \mathcal{P}$.
Thus \eqref{eq:noSwitch_time_derivative_cond} implies that
$\mathscr{R}_1(\mu,p)$ and $\mathscr{R}_2(\mu,p)$ are invariant sets
of the trajectories of \eqref{eq:noSwitchSystem}.

Also,
the upper bound in
\eqref{eq:noSwitch_time_derivative_cond} shows that
if $x(0)\in \mathscr{R}_1(\mu,p)$, then 
$x(T)\in \mathscr{R}_2(\mu,p)$ for $T$ satisfying \eqref{eq:T_def}.
\end{proof}

We use the next result on average dwell time
for the finite-time estimation of the state at the ``zooming-out'' stage.
Such estimation is needed for Lyapunov stability of the closed-loop system.
\begin{lemma}
\label{lem:ADT_upperbound}
{\em
Fix an initial time $\tau_0 \geq 0$.
Suppose that $\sigma$ satisfies \eqref{eq:ADT_cond}.
Let $\tau \in (0, \tau_a)$, and choose an integer $N$ so that
\begin{equation}
\label{eq:N_ADTcond}
N > \frac{\tau_a}{\tau_a - \tau} \left( N_0 - \frac{\tau}{\tau_a} \right).
\end{equation}
Then there exists a nonnegative real number $T \leq  (N-1)\tau$ such that 
$N_{\sigma}(\tau_0+T+\tau,\tau_0+T) = 0$.
}
\end{lemma}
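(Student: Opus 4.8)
The plan is to argue by contradiction, extracting a \emph{chain} of switching times from the failure of the conclusion rather than relying on a crude count. Suppose the assertion is false, so that for every $T \in [0, (N-1)\tau]$ the interval $(\tau_0 + T, \tau_0 + T + \tau]$ contains at least one switching time. First I would build a chain of switches: applying the negated conclusion at $T = 0$ gives a switch $s_1 \in (\tau_0, \tau_0 + \tau]$, and given a switch $s_k$ with $s_k - \tau_0 \leq (N-1)\tau$, applying it at $T = s_k - \tau_0$ gives a further switch $s_{k+1} \in (s_k, s_k + \tau]$, so that consecutive gaps satisfy $s_{k+1} - s_k \leq \tau$. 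An induction shows $s_k \leq \tau_0 + k\tau$, so the construction runs for $k = 1, \dots, N$ and produces $N$ distinct switches $s_1 < \cdots < s_N$ in $(\tau_0, \tau_0 + N\tau]$ with controlled span $s_N - s_1 \leq (N-1)\tau$.

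Next I would feed this chain into the average dwell time bound \eqref{eq:ADT_cond}. For any $s < s_1$ the interval $(s, s_N]$ contains all $N$ switches, whence $N \leq N_{\sigma}(s_N, s) \leq N_0 + (s_N - s)/\tau_a$; letting $s \nearrow s_1$ yields $N \leq N_0 + (s_N - s_1)/\tau_a \leq N_0 + (N-1)\tau/\tau_a$. Rearranged, this reads $\tau_a(N - N_0) \leq (N-1)\tau$, which directly contradicts \eqref{eq:N_ADTcond}, since that condition is equivalent to $\tau_a(N - N_0) > (N-1)\tau$. The contradiction forces some admissible $T \leq (N-1)\tau$ to leave the window switch-free, proving the lemma.

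The hard part, and the reason I would not simply invoke a pigeonhole principle, is that the obvious partition of $(\tau_0, \tau_0 + N\tau]$ into $N$ length-$\tau$ blocks is too weak: assuming each block carries a switch only gives the span bound $N \leq N_0 + N\tau/\tau_a$, which remains consistent with \eqref{eq:N_ADTcond} and yields no contradiction. The crucial sharpening is that the greedy chain packs its $N$ switches into a span of only $(N-1)\tau$ rather than $N\tau$; applying \eqref{eq:ADT_cond} to this tighter window is exactly what matches the stated threshold on $N$. I would also take care with the half-open counting convention of $N_{\sigma}$, passing to the limit $s \nearrow s_1$ so that $s_1$ itself is counted among the $N$ switches.
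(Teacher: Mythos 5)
Your proof is correct and takes essentially the same approach as the paper's: both negate the conclusion to extract $N$ switches whose consecutive gaps are at most $\tau$, and then apply the average dwell time bound \eqref{eq:ADT_cond} to the window beginning just before the first of these switches (span at most $(N-1)\tau$, recovered via the limit $s \nearrow s_1$, matching the paper's $\epsilon$-argument) to contradict \eqref{eq:N_ADTcond}. The only cosmetic difference is that you build the chain greedily from the negated conclusion, whereas the paper shows by a minimal-index argument that the first $N$ consecutive switching times after $\tau_0$ already satisfy $t_k - t_{k-1} \leq \tau$.
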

\begin{proof}
Let us denote the switching times by $t_1,t_2,\dots$, and
fix an integer $N \geq 1$.
Suppose 
\begin{equation}
\label{eq:N_sigma>0}
N_{\sigma}(\tau_0+T+\tau,\tau_0+T) > 0
\end{equation}
for $T \leq (N-1)\tau$. 
Then
we have $t_k - t_{k-1} \leq \tau$ for $k=1,\dots,N$,
where $t_0 := \tau_0$.
Indeed, if 
\begin{equation}
\label{eq:tk_tk-1}
t_k - t_{k-1} > \tau
\end{equation}
for some $k$ and 
if we let $\bar k$ be the smallest integer $k$ satisfying \eqref{eq:tk_tk-1},
then we obtain $t_{\bar k -1} - \tau_0 \leq (\bar k -1)\tau$ and
$N_{\sigma}(t_{\bar k-1}+\tau , t_{\bar k-1}) = 0$, which contradicts
\eqref{eq:N_sigma>0}.
Hence for $0 < \epsilon \leq t_1$,
\begin{equation*}
t_N- (t_1 - \epsilon) = \sum_{k=2}^N (t_k -t_{k-1}) + \epsilon
\leq (N-1)\tau + \epsilon
\end{equation*}
It follows from \eqref{eq:ADT_cond} that
\begin{align*}
N = N_{\sigma}(t_N,t_1 - \epsilon) 
\leq N_0 + \frac{t_N- (t_1 - \epsilon)}{\tau_a} 
\leq N_0 + \frac{(N-1)\tau + \epsilon}{\tau_a}.
\end{align*}
Therefore $N$ satisfies the following inequality:
\begin{equation}
\label{eq:n_upperbound}
N \leq \frac{\tau_a}{\tau_a - \tau} 
\left(
N_0 - \frac{\tau - \epsilon}{\tau_a}
\right).
\end{equation}
Since $\epsilon \in (0, t_1)$ was arbitrary, 
\eqref{eq:n_upperbound} is equivalent to
\begin{equation}
\label{eq:n_upperbound2}
N \leq \frac{\tau_a}{\tau_a - \tau} 
\left(
N_0 - \frac{\tau }{\tau_a}
\right).
\end{equation}
Thus we have shown that
if \eqref{eq:N_sigma>0} holds for all $T \leq (N-1)\tau$,
then $N$ satisfies \eqref{eq:n_upperbound2}.
The contraposition of this statement gives a desired result. 
\end{proof}

\subsection{The proof for convergence to the origin} 
Define $\Gamma$ by
\[\Gamma = \max_{p \in \mathcal{P}}\|A_p\|.\]
We split the proof into two stages:
the ``zooming-out'' and ``zooming-in'' stages.

\subsubsection{\sl The ``Zooming-out'' stage}
Set the control input $u = 0$, and fix $\bar \tau > 0$ and $\chi > 0$.
Then increase $\mu$ in the following way:
$\mu(t)=1$ for $t \in [0,\bar \tau)$, 
$\mu(t)=e^{(1+\chi)k\Gamma \bar \tau}$
for $t \in [k \bar \tau,(k+1)\bar \tau)$ and $k=1,2,\dots$.

Choose $\tau \in (0, \tau_a)$, and
suppose that 
we observe 
\begin{gather}
|q_{\mu(t)}(y(t))| \leq 
M \mu(t) - \Delta \mu(t), \label{eq:zooming_out_QY} \\
\sigma(t) = \sigma(t_0) =: p \label{eq:zooming_out_SS}
\end{gather}
for $t \in [t_0, t_0 + \tau)$.
First we shall describe how to determine $\mu(t_0+\tau)$ 
after this observation,
and next we shall
prove the existence of such $t_0 \geq 0$.

Define the observability Gramian $W_p(\tau)$ by
\begin{equation*}
W_p(\tau) := \int^{\tau}_0 e^{A_p^{\top}t}C_p^{\top} C_pe^{A_pt} dt
\end{equation*}
and the estimated state $\xi(t_0)$ by
\begin{equation}
\label{eq:xi(t0)_def}
\xi(t_0) := W_p(\tau)^{-1}
\int^{\tau}_{0}  e^{A_p^{\top}t}C_p^{\top} q_{\mu(t_0+t)}(y(t_0+t))dt
\end{equation}
Since $u(t) = 0$, we also have
\begin{equation}
\label{eq:x(t0)_for_xi(t0)}
x(t_0) = W_p(\tau)^{-1}
\int^{\tau}_{0}  e^{A_p^{\top}t}C_p^{\top} y(t_0+t)dt.
\end{equation}
Moreover,
if \eqref{eq:zooming_out_QY} holds, then
\eqref{eq:quantizer_cond2} gives
\begin{equation*}
\left| y(t) \right| 
\leq 
M\mu(t)
\qquad (t_0 \leq t < t_0 + \tau),
\end{equation*}
and hence
\begin{equation*}
|q_{\mu(t)}(y(t)) - y(t)| \leq \Delta \mu(t) 
\qquad (t_0 \leq t < t_0 + \tau).
\end{equation*}
Therefore \eqref{eq:xi(t0)_def} and \eqref{eq:x(t0)_for_xi(t0)} show that
\begin{equation*}
|x(t_0) - \xi(t_0)| \leq \|W_p(\tau)^{-1}\|\tau \Lambda_p^C(\tau) 
\Delta \mu ^- (t_0+\tau),
\end{equation*}
where
\begin{align*}
\Lambda_p^C(\tau) &:= \max_{0\leq t \leq \tau} \left\| C_pe^{A_pt} \right\| \\
\mu ^- (t_0+\tau) &:= \lim_{t \nearrow t_0 + \tau}\mu(t).
\end{align*}
Since $x(t_0+ \tau) = e^{A_p \tau} x(t_0)$, if we set 
\begin{equation}
\label{eq:xi(t0+tau)_def}
\xi(t_0+ \tau) = e^{A_p\tau} \xi(t_0),
\end{equation}
then 
\begin{align*}
&|x(t_0+ \tau) - \xi(t_0+ \tau)| \\
&~ \leq 
\|W_p(\tau)^{-1}\|\tau \Lambda_p^C (\tau) \left\| e^{A_p\tau} \right\| 
\Delta \mu ^- (t_0+\tau)
=: e(t_0+\tau).
\end{align*}

It follows that
\begin{align}
|z(t_0 + \tau)| 
&\leq
|x(t_0 + \tau)| + |x(t_0 + \tau) - \xi(t_0 + \tau)| \notag \\
&\leq
|\xi(t_0 + \tau)| + 2|x(t_0 + \tau) - \xi(t_0 + \tau)| \notag \\
&\leq
|\xi(t_0 + \tau)| + 2 e(t_0+\tau)
=: E(t_0+\tau).
\notag
\end{align}
Thus if we choose $\mu(t_0+\tau)$ so that
\begin{equation}
\label{eq:z(ta0+Ntau)2}
\mu(t_0 + \tau) \geq
\sqrt{\frac{\overline \lambda_P}{\underline \lambda_P}}
\frac{C_{\max}}{M} E(t_0+\tau),
\end{equation}
then $z(t_0 + \tau)\in 
\mathscr{R}_1(\mu(t_0 + \tau),\sigma(t_0 + \tau))$.

It remains to prove the existence of $t_0 \geq 0$ satisfying 
\eqref{eq:zooming_out_QY} and \eqref{eq:zooming_out_SS} for 
$t \in [t_0,t_0 + \tau)$.
By the definition of $\mu$ and \eqref{eq:M_Delta_cond_main_thm}, 
there is $\tau_0 \geq 0$ such that
\begin{equation*}
\left|
y(t)
\right| \leq M\mu(t) -2\Delta \mu(t)
\qquad (t \geq \tau_0).
\end{equation*}
In conjunction with \eqref{eq:quantizer_cond1_nonSaturation},
this implies that \eqref{eq:zooming_out_QY} holds for $t \geq \tau_0$.
Let $N$ be an integer satisfying \eqref{eq:N_ADTcond}.
Then Lemma \ref{lem:ADT_upperbound} guarantees the
existence of $t_0 \in [\tau_0, \tau_0 + (N-1)\tau]$ such that 
\eqref{eq:zooming_out_SS} holds for $t \in [t_0, t_0 + \tau)$.

\subsubsection{\sl The ``Zooming-in'' stage}
Choose $\kappa$ so that \eqref{eq:M_Delta_cond} holds,
and define $T_0 := t_0+\tau$.
We consider \eqref{eq:ClosedSystem} with $\xi(T_0)$ 
calculated by \eqref{eq:xi(t0)_def} and \eqref{eq:xi(t0+tau)_def}.
The discussion above ensures 
$z(T_0)\in \mathscr{R}_1(\mu(T_0),\sigma(T_0))$.
Fix $T$ so that \eqref{eq:T_def} is satisfied.

Let us first investigate the case 
without switching on the interval $(T_0, T_0+T]$.
In this case, 
if we let $\mu(t) = \mu(T_0)$ for $t \in [T_0, T_0+T)$, then
Lemma \ref{lem:fix_zoom_parameter} shows that
$z(T_0 + T)\in \mathscr{R}_2(\mu(T_0),\sigma(T_0))$.
Define $\Omega$ by
\begin{equation}
\label{eq:Omega_def}
\Omega := \sqrt{\frac{\overline \lambda_P}{\underline \lambda_P}}
\frac{\Theta\Delta (1+\kappa)C_{\max}}{M},
\end{equation}
and set $\mu(T_0+T) = \Omega \mu(T_0)$.
Then we obtain $z(T_0+T) \in \mathscr{R}_1(\mu(T_0+T),\sigma(T_0+T))$.
Note that $\Omega < 1$ by \eqref{eq:M_Delta_cond_main_thm}.
As regards after $T_0+T$,
if switching does not occur on the interval $(T_0+mT, T_0+(m+1)T]$ for $m=1,2\dots$, 
then we update $\mu$ in the same way.

We now study the switched case. 
Let $T_1,T_2,\dots,T_n$ be switching times on the interval $(T_0, T_0+T]$.
We sometimes write $T_{n+1}$ rather than $T_0+T$ for simplicity of notation.
Suppose that
for every $p_1,p_2 \in \mathcal{P}$ with $p_1 \not= p_2$,
there exists $c_{p_2,p_1} > 0$ such that
\begin{equation}
\label{eq:Lyapunov_cond}
z^{\top} P_{p_1} z \leq  c_{p_2,p_1} \cdot z^{\top} P_{p_2} z.
\end{equation}
for all $z \in \mathbb{R}^{2{\sf n}}$.
We adjust $\mu$ at every switching time in the following way:
\begin{align*}
\mu(t) =
\sqrt{
\prod _{\ell = 0}^{k-1}c_{\sigma(T_{\ell + 1}), \sigma(T_{\ell})}
} \cdot \mu(T_0) \qquad (T_k \leq t < T_{k+1})
\end{align*}
for $k=0,\dots,n$.

Lemma \ref{lem:fix_zoom_parameter} suggests that
$\mathscr{R}_i (\mu(T_k),\sigma(T_k))$ ($i=1,2$)
are invariant sets for $t \in [T_k,T_{k+1})$, $k=0,\dots,n$.
Moreover, by \eqref{eq:Lyapunov_cond},
if $z(t) \in \mathscr{R}_i (\mu_0,p_1)$ for some $\mu_0$, then
$z(t) \in \mathscr{R}_i (\sqrt{c_{p_2,p_1}}\mu_0,p_2)$ ($i=1,2$) for
$p_1,p_2 \in \mathcal{P}$ with $p_1\not=p_2$.
Hence it follows that $z(t) \in \mathscr{R}_1 (\mu(t),\sigma(t))$
for $t \in [T_0,T_{n+1})$.
Also, if there is $t_1 \in [T_0,T_{n+1})$ such that
$x(t_1) \in \mathscr{R}_2 (\mu(t_1),\sigma(t_1))$, then
$z(t) \in \mathscr{R}_2 (\mu(t),\sigma(t))$ for all $t \in [t_1, T_{n+1})$.
To see the existence of such $t_1$, suppose for a contradiction that 
\begin{equation}
\label{eq:z_NOT_R2}
z(t) \not\in \mathscr{R}_2 (\mu(t),\sigma(t)),\qquad (T_0 \leq t < T_{n+1}).
\end{equation}

First we examine the case $T_{n+1} > T_n$.
Since a Filippov solution is (absolutely) continuous, it follows from
\eqref{eq:z_NOT_R2} that
\begin{equation}
\label{eq:Lyapunov_T_0+T}
\lim_{t \nearrow T_{n+1}} z^{\top}P_{\sigma(t)}z 
\geq \overline \lambda_P (\Theta \Delta (1+\kappa))^2\mu(T_n)^2.
\end{equation}

On the other hand, since 
$z(t) \in \mathscr{B}_1(\mu(t))$ and $z(t) \not\in \mathscr{B}_2(\mu(t))$ for
$t < T_{n+1}$, 
\eqref{eq:noSwitch_time_derivative_cond} shows that
\begin{align*}
\lim_{t \nearrow T_1}z^{\top}P_{\sigma(t)}z 
\leq
\frac{\underline \lambda_P M^2\mu(T_0)^2}{C_{\max}^2}
-
(T_1-T_0)\underline \lambda_Q \kappa(1+\kappa)(\Theta \Delta \mu(T_0))^2,
\end{align*}
and hence we have
\begin{align*}
z^{\top}P_{\sigma(T_1)}z 
&\leq
c_{\sigma(T_1), \sigma(T_0)} 
\cdot \left( \lim_{t \nearrow T_1}z^{\top}P_{\sigma(t)}z\right) \\
&=
\left(
\frac{\underline \lambda_P M^2}{C_{\max}^2}
-
(T_1-T_0)\underline \lambda_Q \kappa(1+\kappa)(\Theta \Delta)^2
\right)\mu(T_1)^2.
\end{align*}

If we repeat this process and use \eqref{eq:T_def},
then
\begin{align}
\lim_{t \nearrow T_{n+1}}
z^{\top}P_{\sigma(t)}z 
\leq
\left(
\frac{\underline \lambda_P M^2}{C_{\max}^2}
-
T \underline \lambda_Q \kappa (1+\kappa)(\Theta \Delta )^2
\right) \mu(T_n)^2 
<
\overline \lambda_P (\Theta \Delta (1+\kappa))^2\mu(T_n)^2,
\label{eq:Lyapunov_T_0+T_contradiction}
\end{align}
which contradicts \eqref{eq:Lyapunov_T_0+T}.
Hence we obtain
\begin{equation}
\label{eq:Lyapunov_T_0+T_result}
z(T_{n+1}) = 
\lim_{t \nearrow T_0+T} z(t)
\in \mathscr{R}_2 (\mu(T_n),\sigma(T_n)).
\end{equation}

In the case $T_{n+1} = T_n$,
\eqref{eq:Lyapunov_T_0+T} and \eqref{eq:Lyapunov_T_0+T_contradiction} hold
with $T_{n-1}$ in place of $T_n$, and then we have
\begin{align}
z(T_{n+1}) &= 
\lim_{t \nearrow T_{n+1}} z(t) \notag \\
&\in \mathscr{R}_2 (\mu(T_{n-1}),\sigma(T_{n-1})) \notag \\
&\subset \mathscr{R}_2 (\sqrt{c_{\sigma(T_n), \sigma(T_{n-1})}} 
\cdot \mu(T_{n-1}),\sigma(T_n)). \notag
\end{align}

Thus if $n$ switches occur, then we set 
\[
\mu(T_0+T) = \Omega 
\sqrt{\prod _{\ell = 0}^{n-1}c_{\sigma(T_{\ell + 1}), \sigma(T_{\ell})}} 
\cdot \mu(T_0).
\]
The discussion above implies
$z(T_0+T) \in \mathscr{R}_1(\mu(T_0+T),\sigma(T_0+T))$.
We update $\mu$  in the same way after $T_0+T$.

Finally, 
define 
\begin{equation}
\label{eq:c_def_cont}
c := \max_{p_1 \not= p_2} c_{p_2,p_1}.
\end{equation}
Then
\eqref{eq:ADT_cond} gives
\begin{align}
\mu(T_0+mT) 
\leq \Omega^m \sqrt{c^{N_{\sigma}(T_0+mT, T_0)}}\mu(T_0) 
\leq \sqrt{c^{N_0}} \cdot
	\left(\Omega \sqrt{c^{T/\tau_a}} \right)^m \mu(T_0)
\label{eq:mu(T_0+MT)}
\end{align}
for $m \in \mathbb{N}$.
If $\Omega \sqrt{\lambda^{T/\tau_a}} < 1$, that is, if
the average dwell time $\tau_a$ satisfies
\begin{equation}
\label{eq:adt_cond}
\tau_a > \frac{\log(c)}{2\log(1/\Omega)} T,
\end{equation}
then $\lim_{m\to \infty}\mu(T_0+mT) =0$.
Since $x(t) \in \mathscr{B}_1(\mu(t))$ for $t \geq T_0$, we obtain
$\lim_{t \to \infty} x(t)= 0$.
\hfill $\blacksquare$

\begin{remark}
\noindent
{(a)~}
The proposed method of adjusting $\mu$ is causal but 
sensitive to the time-delay of the switching signal at the ``zooming-in'' stage.
To allow such a delay, we must examine the bound of an error due to
the mismatch of modes between the plant and the controller.
However we do not proceed along this line to avoid technical issues.

\noindent
{(b)~}
Here we have changed $\mu$ at every switching time in the ``zooming-in'' stage.
If we would not, switching might lead to instability of the closed-loop system.
Without adjustment of $\mu$, 
the quantizer does not saturate right after the switch, because
the trajectory belongs to $\mathscr{B}_1(\mu)$.
However, $\mathscr{B}_1(\mu)$ is not an invariant set, so
if we do not change $\mu$, the trajectory may leave $\mathscr{B}_1(\mu)$. 
This leads to saturation of the quantizer.


\end{remark}

\subsection{The proof for Lyapunov stability} 
The proof of Lyapunov stability follows in a line similar to that 
in Sec.~5.5 of \cite{Liberzon2013}.

Let us denote by $\mathscr{B}_{\varepsilon}$ 
the open ball with center at the origin and radius $\varepsilon$
in $\mathbb{R}^{2{\sf n} \times 2 {\sf n}}$.
In what follows,
we use the letters in the previous subsection
and assume that \eqref{eq:adt_cond} holds.

Let $\delta > 0$ be small enough to satisfy
\begin{equation}
\label{eq:delta_cond_for_u=0}
C_{\max} e^{\Gamma N\tau} \delta < \Delta_0.
\end{equation}
Then $q_{\mu(t)}(y(t)) = 0$ for $t \in [0, N\tau]$.
The argument on the existence of $t_0$ at the ``zooming-out'' stage implies that
the time $T_0$, at which the stage changes from ``zooming-out'' to ``zooming-in'',
satisfies $T_0 \leq N\tau$ for every switching signal.

Fix $\alpha > 0$.
By \eqref{eq:xi(t0)_def},
$\xi(T_0) = 0$, and hence we see from \eqref{eq:z(ta0+Ntau)2} 
that $\mu(T_0)$ achieving 
$z(T_0) \in \mathscr{R}_1(\mu(T_0),\sigma(T_0))$
can be chosen so that
\begin{equation}
\label{eq:bar_mu_ineq}
\alpha \leq
\mu(T_0) \leq
\bar \mu,
\end{equation}
where $\bar \mu$ is defined by
\begin{align*}
\bar \mu &:=
\max
\Biggl\{\alpha,~~
\sqrt{\frac{\overline \lambda_P}{\underline \lambda_P}}
\frac{ \Delta \tau C_{\max}
e^{(1+\chi)\lfloor N\tau / \bar \tau \rfloor \Gamma \bar \tau}}{M}
\times
\max_{p \in \mathcal{P}} 
\left(
\|W_p(\tau)^{-1}\|\Lambda_p^C(\tau)  \left\| e^{A_p\tau} \right\|
\right)
\Biggr\}.
\end{align*}
Note that $\bar \mu$ is independent on switching signals.

By \eqref{eq:mu(T_0+MT)}, if $m$ satisfies 
\begin{equation}
\label{eq:m_cond_Lyapunov}
m > \frac{\log(\bar \mu M \sqrt{c^{N_0}}/(\varepsilon C_{\max}))}
{\log (1/(\Omega\sqrt{c^{T/\tau_a}}))},
\end{equation}
then we have
\begin{equation}
\label{eq:R_1(T1)}
\mathscr{R}_1(\mu(T_0+mT),\sigma(T_0+mT)) \subset \mathscr{B}_{\varepsilon}.
\end{equation}
Let $\bar m$ be the smallest integer satisfying \eqref{eq:m_cond_Lyapunov}.

Define
$T_1 := T_0+\bar m T \leq N\tau + \bar m T$
and 
\[
\underline{c} := \min_{p_1 \not= p_2}c_{p_2,p_1}.
\]
By \eqref{eq:bar_mu_ineq}, we have
\begin{align}
\mu(t) 
&\geq \Omega^{\bar m}
\sqrt{\underline{c}^{N_0+\bar m T/\tau_a}}\mu(T_0) \notag \\
&\geq 
\alpha\Omega^{\bar m}
\sqrt{\underline{c}^{N_0+\bar m T/\tau_a}}=: \eta
\label{eq:mu_Omega_bound}.
\end{align}
for $t \in [T_0, T_1]$.
Let
$\delta > 0$ satisfy
\begin{gather}
\label{eq:delta_cond1}
C_{\max} e^{\Gamma (N\tau + \bar m T)} \delta < 
\eta \Delta_0\\
\label{eq:delta_cond2}
e^{\Gamma (N\tau + \bar m T)} \delta < 
\min \left\{ \varepsilon,~~
\sqrt{\frac{\underline \lambda_P}{\overline \lambda_P} } 
\frac{M \eta}{C_{\max}} \right\}.
\end{gather}

By \eqref{eq:mu_Omega_bound} and \eqref{eq:delta_cond1}, 
$q_{\mu(t)}(y(t))=0$ on the interval $[0,T_1]$, so
$\xi(t) = 0$ and $u(t) = 0$ on the same interval.
Combining this with \eqref{eq:delta_cond2}, we obtain
$|x(t)| \leq e^{\Gamma (N\tau + \bar m T)} \delta < \varepsilon$
for $t \leq T_1$. 
Thus 
\begin{equation}
\label{eq:z_eps1}
|z(t)| \leq |x(t)| + |\xi(t)|< \varepsilon \qquad (t \leq T_1).
\end{equation}

On the other hand, by \eqref{eq:mu_Omega_bound} and \eqref{eq:delta_cond2},
\begin{align*}
z(T_1)^{\top} P_{\sigma(T_1)} z(T_1) &\leq
\lambda_{\max}(P_{\sigma(T_1)}) |z(T_1)|^2 \\
&<
\frac{\underline \lambda_P M^2\eta^2}{C_{\max}^2}
\leq
\frac{\underline \lambda_P M^2\mu(T_1)^2}{C_{\max}^2}
\end{align*}
for every $p \in \mathcal{P}$, and
hence $z(T_1) 
\in \mathscr{R}_1(\mu(T_1),\sigma(T_1))\subset \mathscr{B}_{\varepsilon}$ by
\eqref{eq:R_1(T1)}.
In addition, since
\begin{align*}
\mu(T_1+kT) &\leq 
\sqrt{c^{N_0}} \cdot
	\left(\Omega \sqrt{c^{T/\tau_a}} \right)^{\bar m+k} \mu(T_0) \\
&\leq
\sqrt{c^{N_0}} \cdot
	\left(\Omega \sqrt{c^{T/\tau_a}} \right)^{\bar m} \bar \mu
\end{align*}
for all $k \geq 0$ and since $\bar m$ satisfies
\eqref{eq:m_cond_Lyapunov}, it follows that that 
$\mathscr{R}_1(\mu(T_1+kT),\sigma(T_1+kT))$ also lies in
$\mathscr{B}_{\varepsilon}$.
Since $\mathscr{R}_1(\mu(t),\sigma(t))$ is
an invariant set for $t \geq T_0$, we have
\begin{equation}
\label{eq:z_eps2}
|z(t)| < \varepsilon \qquad (t \geq T_1).
\end{equation}
From \eqref{eq:z_eps1} and \eqref{eq:z_eps2}, we see that 
Lyapunov stability can be achieved.
\hfill $\blacksquare$

\begin{remark}
Through 
Lemma \ref{lem:ADT_upperbound},
we implicitly use the average dwell time property to obtain
the upper bound $\bar \mu$ in \eqref{eq:bar_mu_ineq}.
\end{remark}

\section{Numerical Examples}
Consider the continuous-time switched system \eqref{eq:ClosedSystem} with
the following two modes:
\begin{gather*}
A_1 = 
\begin{bmatrix}
1 & 0 \\ 0 & -4
\end{bmatrix},\quad
B_1 = 
\begin{bmatrix}
1 \\ 0 
\end{bmatrix},\quad
C_1 = 
\begin{bmatrix}
1 & 1
\end{bmatrix},\\
A_2 = 
\begin{bmatrix}
0 & 1 \\ -1 & 0
\end{bmatrix},\quad
B_2 = 
\begin{bmatrix}
0 \\ 1
\end{bmatrix},\quad
C_2 = 
\begin{bmatrix}
0 & -1
\end{bmatrix}.
\end{gather*}
As the feedback gain and the observar gain of each mode, we take
\begin{gather*}
K_1 = 
\begin{bmatrix}
-3 & 1
\end{bmatrix},~
L_1 = 
\begin{bmatrix}
-2 \\ 0 
\end{bmatrix},~
K_2 = 
\begin{bmatrix}
0 & -1
\end{bmatrix},~
L_2 = 
\begin{bmatrix}
0 \\ 1
\end{bmatrix}.
\end{gather*}
Let the quantizer $q$ be uniform-type, and
define the parameters $M$ and $\Delta$ of the quantizer by
$
M = 20$,
$
\Delta = 0.1,
$
Also, define $Q_1$ and $Q_2$ in \eqref{eq:Lyapunov_cont} and $\kappa$ in 
\eqref{eq:M_Delta_cond} by
$
Q_1 = 
\diag(2,8,2,8)$,
$
Q_2 =
\diag(1,1,1,1),
$
$\kappa = 2.5$,
where $\diag(e_1,\dots,e_4)$ means a diagonal matrix whose diagonal
elements starting in the upper left corner are $e_1,\dots,e_4$.
Then we obtain 
$T \approx 2.20$ in \eqref{eq:T_def},
$\Omega \approx 0.824$ in \eqref{eq:Omega_def}, 
$c \approx 4.03$ in \eqref{eq:c_def_cont}, and
$\tau_a \approx 7.90$ in \eqref{eq:adt_cond}.

Fig.~\ref{fig:cont_simulation} (a) and (b) show that the output $y$ and
the $\ell^2$-norm of the state $x$ of the switched system~\eqref{eq:SLS}
with $x(0) = [-6~~5]^{\top}$ and $\mu(0) = 1$.
In this example, the ``zooming-out'' stage finished at $t = 0.5$.
We see the non-smooth behaviors of $y$ and $x$ at the switching times $t=5,20,28,36$.
In particular, we observe from the behaviors of $x$ at $t=5,28$ that, not surprisingly,
adjustments of $\mu$ at some swithcing times are conservative.

\begin{figure}[t]
\centering
\subcaptionbox{The output $y$.}
{\includegraphics[width = 7.2cm,bb= 15 10 700 530,clip]{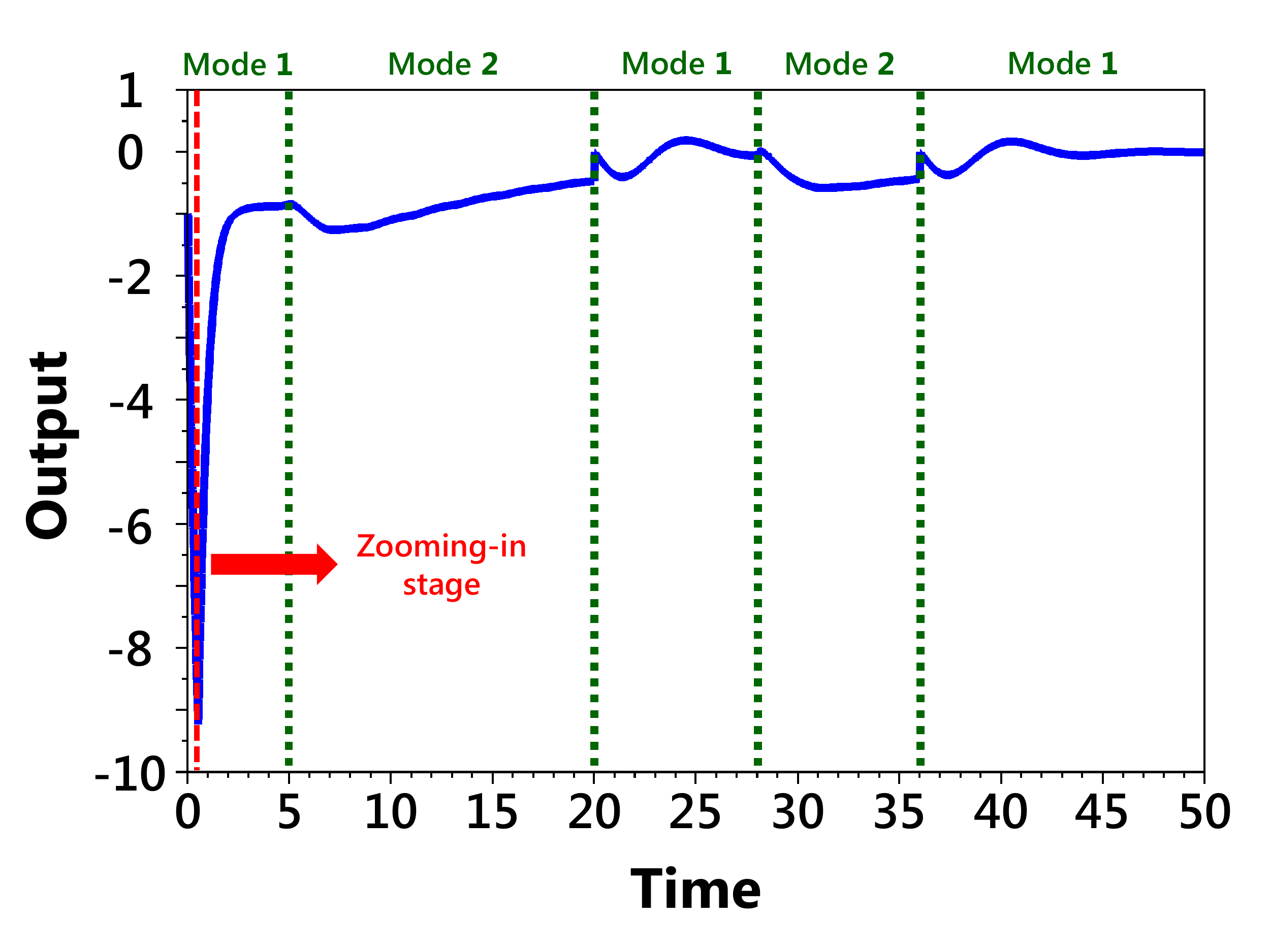}}

\subcaptionbox{The $\ell^2$-norm of the state $x$.}
{\includegraphics[width = 6.8cm,bb= 25 10 690 550,clip]{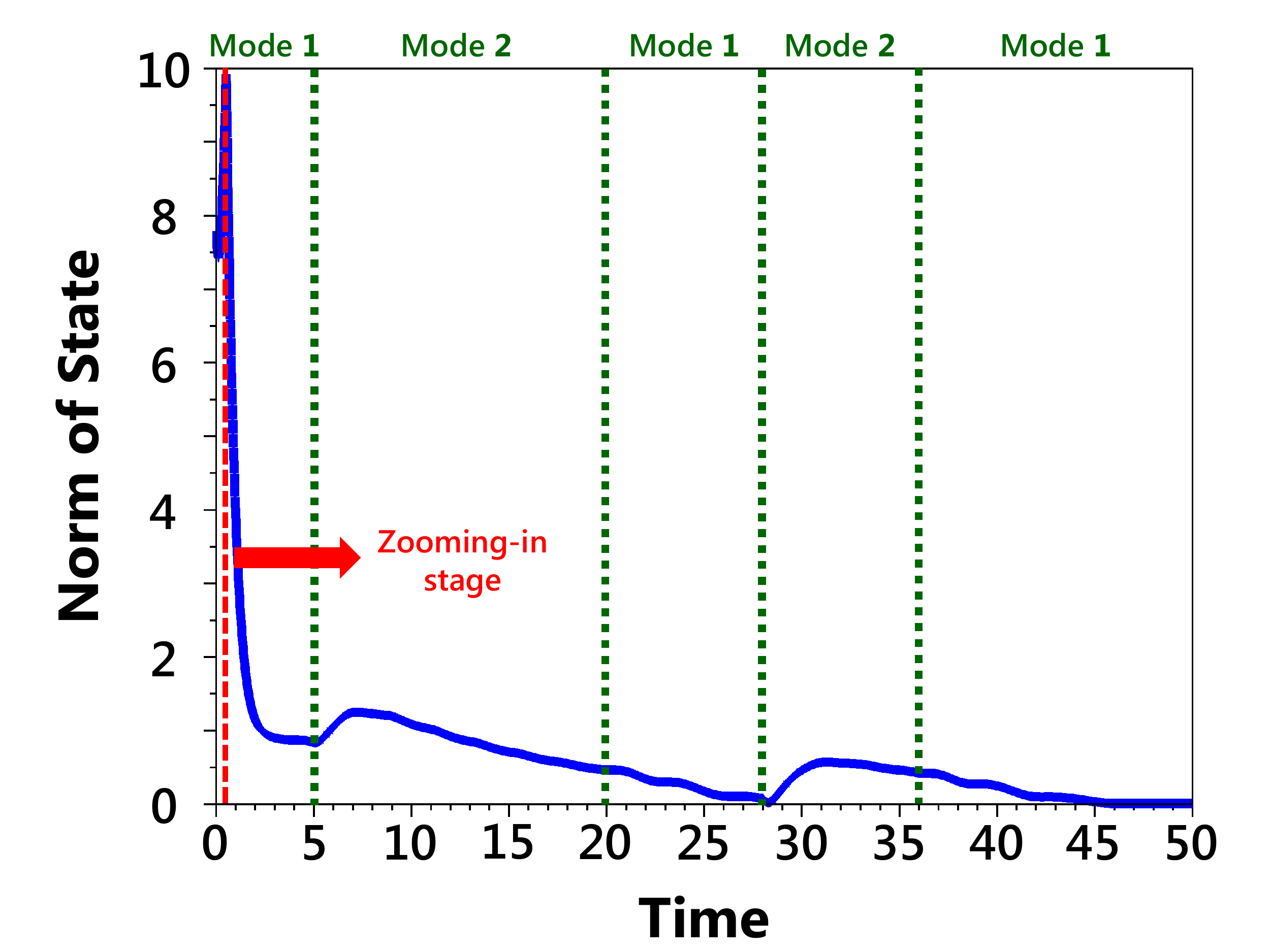}}
\caption{Simulation with $x(0)= [-6~5]^{\top}$ and $\sigma(0) = 1$.
\label{fig:cont_simulation}}
\end{figure}


\section{CONCLUDING REMARKS}
The stabilization of continuous-time switched linear systems
by quantized output feedback has been studied.
We have proposed an output encoding method for
globally asymptotic stability.
The encoding method is rooted in the non-switched case, and
an additional adjustment of the zoom parameter 
is needed at every switching time
in the zooming-in stage.
We have discussed the effect of switching by
using multiple Lyapunov functions and an average dwell time assumption.



\bibliographystyle{IEEEtran}

\begin{thebibliography}{10}
\providecommand{\url}[1]{#1}
\csname url@samestyle\endcsname
\providecommand{\newblock}{\relax}
\providecommand{\bibinfo}[2]{#2}
\providecommand{\BIBentrySTDinterwordspacing}{\spaceskip=0pt\relax}
\providecommand{\BIBentryALTinterwordstretchfactor}{4}
\providecommand{\BIBentryALTinterwordspacing}{\spaceskip=\fontdimen2\font plus
\BIBentryALTinterwordstretchfactor\fontdimen3\font minus
  \fontdimen4\font\relax}
\providecommand{\BIBforeignlanguage}[2]{{%
\expandafter\ifx\csname l@#1\endcsname\relax
\typeout{** WARNING: IEEEtran.bst: No hyphenation pattern has been}%
\typeout{** loaded for the language `#1'. Using the pattern for}%
\typeout{** the default language instead.}%
\else
\language=\csname l@#1\endcsname
\fi
#2}}
\providecommand{\BIBdecl}{\relax}
\BIBdecl

\bibitem{Ishii2002Book}
H.~Ishii and B.~A. Francis, \emph{Limited Data Rate in Control Systems with
  Networks}.\hskip 1em plus 0.5em minus 0.4em\relax Lecture Notes on Control
  and Information Science, Vol. 275, Berlin: Springer, 2002.

\bibitem{Brockett2000}
R.~W. Brockett and D.~Liberzon, ``Quantized feedback stabilization of linear
  systems,'' \emph{IEEE Trans. Automat. Control}, vol.~45, pp. 1279--1289,
  2000.

\bibitem{Liberzon2003Automatica}
D.~Liberzon, ``Hybrid feedback stabilization of systems with quantized
  signals,'' \emph{Automatica}, vol.~39, pp. 1543--1554, 2003.

\bibitem{Asuma2008}
S.~Asuma and T.~Sugie, ``{Optimal dynamic quantizers for discrete-valued input
  control},'' \emph{Automatica}, vol.~44, pp. 396--406, 2008.

\bibitem{Liberzon2003}
D.~Liberzon, ``On stabilization of linear systems with limited information,''
  \emph{IEEE Trans. Automat. Control}, vol.~48, pp. 304--307, 2003.

\bibitem{Liberzon2013}
------, ``Finite data-rate feedback stabilization of switched and hybrid linear
  systems,'' 2013, to appear in {\it Automatica}.

\bibitem{Nair2003}
G.~N. Nair, S.~Dey, and R.~J. Evans, ``Infinmum data rates for stabilising
  markov jump linear systems,'' in \emph{Proc. 42nd IEEE CDC}, 2003.

\bibitem{Zhang2009}
C.~Zhang, K.~Chen, and G.~E. Dullerud, ``Stabilization of marmarkov jump linear
  systems with limited information--a convex approach,'' in \emph{Proc. ACC
  2009}, 2009.

\bibitem{Ling2010}
Q.~Ling and H.~Lin, ``Necessary and sufficient bit rate conditions to stabilize
  quantized markov jump lineaer systems,'' in \emph{Proc. ACC 2010}, 2010.

\bibitem{Hespanha1999CDC}
J.~P. Hespanha and A.~S. Morse, ``Stability of swithched systems with average
  dwell-time,'' in \emph{Proc. 38th IEEE CDC}, 1999.

\bibitem{Liberzon2007}
D.~Liberzon and D.~Ne\v{s}i\'{c}, ``Input-to-state stabilization of linear
  systems with quantized state measurement,'' \emph{IEEE Trans. Automat.
  Control}, vol.~52, pp. 767--781, 2007.

\bibitem{Liberzon2003Book}
D.~Liberzon, \emph{Switching in Systems and Control}.\hskip 1em plus 0.5em
  minus 0.4em\relax Birkh\"auser, Boston, 2003.

\bibitem{Filippov1988}
A.~F. Filippov, \emph{Differential Equations with Discontinuous Rgihthand
  Sides}.\hskip 1em plus 0.5em minus 0.4em\relax Dordrecht, The Netherlans:
  Kluwer, 1988.

\bibitem{Cortes2008}
J.~Cort\'es, ``{Discontinuous dynamical systems},'' \emph{IEEE Control Syst.
  Mag.}, vol.~28, pp. 36--73, 2008.

\bibitem{Skafidas1999}
E.~Skafidas, R.~J. Evans, A.~V. Savkin, and I.~R. Petersen, ``Stability results
  for switched controller systems,'' \emph{Automatica}, vol.~35, pp. 553--564,
  1999.

\end{thebibliography}

\end{document}